\title{On the closed-form expected NPVs of double barrier strategies for regular diffusions}
\author{Chongrui Zhu \thanks{Corresponding author: chongruizhu5@gmail.com}}
\date{}
\newtheorem{theorem}{Theorem}
\newtheorem{lemma}[theorem]{Lemma}
\newtheorem{remark}{Remark}
\newtheorem{proposition}{Proposition}
\newtheorem{assumption}[theorem]{Assumption}
\begin{document}

\maketitle

\begin{abstract}
The core of the research is to provide the explicit expression for the expected net present values (NPVs) of double barrier strategies for regular diffusions on the real line without solving differential equations. Under the so-called bail-out setting, the value of the expected NPVs of an insurance company varies according to the choice of a pair of policies, which consist of dividend payments paid out and capital injections received. In the case of the double barrier strategy, the expected NPVs are expressible with the help of certain types of functions allowing explicit expression in some cases, which is called the bivariate $q$-scale function in the article. This is accomplished by making use of a perturbation technique in \cite{czarna2014dividend}, which could lead to the linear equation system. In addition, a condition ensuring the existence of an optimal (upper) barrier level is presented. In the end, examples fitting the condition for selecting the optimal barrier are given.  
\end{abstract}

\section{Introduction}

In recent years, of much significance to the dividends control community are applications of fluctuation theory for spectrally negative L\'evy processes (SNLPs). To be more precise, the so-called $q$-scale function, as the eigen-function of SNLPs, plays an increasingly crucial part in giving the analytical form of the expected NPVs in various problem settings. A range of the representative literature on this topic can be found in \cite{loeffen2010finetti,bayraktar2014optimal,avanzi2016optimal,noba2018optimal} and references therein. The question arising out of the studies aforesaid naturally is whether the equivalent of the $q$-scale function exists regarding differing classes of stochastic processes and, additionally, whether the idea of utilizing that to give the closed-form expected NPVs is plausible or not. In response to the question, the case of the double barrier strategy for regular diffusions is under consideration in this paper.

The risk surplus process is the regular diffusion whose drift and volatility parameters depend on the surplus level itself. The so-called bail-out setting initiated in \cite{avram2007optimal}, which is the offshoot of classical de Finetti setting, modifies the controlled risk reserve with the capital perpetually injected so that it could never hit below zero level. Under this setting, over an infinite time horizon the cost term concerning the cumulative discounted volume of capital injected is incurred in our expected NPVs which contain the cumulative discounted dividend payments.

The treatment of two-sided exit problem for regular diffusions in \cite{lehoczky1977formulas,zhang2015occupation} offers an enlightenment for suggesting the analytical expression for the expected NPVs. A specific bivariate function represented by the two fundamental solutions to the Sturm-Liouvile equation characterizes the solution to the two-sided exit problem. The double barrier strategy concerning both a positive barrier at $a>0$ and the zero barrier would bring the surplus being out of the interval $[0,a]$ back to the adjacent barrier level. Making use of the inextricable connection between the double barrier strategy and two-sided exit problem, the expected NPVs are expressed explicitly with the help of the bivariate function. Here, for simplicity, we name this function by \textit{bivariate $q$-scale functions for regular diffusions}, which could be thought of as the analogue of the $q$-scale function for SNLPs in our case.

The realization of the above-mentioned expression seems to be unwarranted for regular diffusions not least because of the lack of the spatial homogeneity property, which, as well as the strong Markov property, proves to be the necessary tool for establishing  explicit expected NPVs based upon the research by \cite{avram2004exit,BIFFIS201085} and is expounded as one of the fundamental characteristics for general L\'evy processes in \cite{ken1999levy}. To specify, for a given scale-valued stochastic process, the spatial homogeneity states that  $Z=\{Z_t,t\geq 0\}$, we have
\begin{gather*}
    \{Z_t,t\geq 0;Z_0=x\}
    \text{  is equal in law to  }
    \{Z_t+x,t \geq 0;Z_0=0\},
\end{gather*}
for $x\in \mathbb{R}$. The exact "fine" properties as to the trajectory of regular diffusions that we could possibly rely on in obtaining closed-form manifestation are just the strong Markov property and continuity of sample path. Nevertheless, in view of the peculiar construction of the double barrier strategy in conjunction with the aforementioned property, the values of the expected NPVs at two barrier levels actually fulfill the linear equations innately, which almost yields the desired. It is also noteworthy that there is no need to involve excursion theory in this paper.

Also, the survey includes an optimization result, which says that, for certain types of diffusions, the expected NPVs in question could be optimized by selecting a suitable (upper) barrier level. Thanks to the closed-form expression, the candidate barrier level is chosen to be the unique zero point of the certain function represented in terms of bivariate $q$-scale functions. 

The rest of the paper is structured as follows: in Section \ref{sec2} we introduce the bivariate $q$-scale function for regular diffusions and give the mathematical definition of the expected NPVs and the construction of the double barrier strategy. In Section \ref{sec3}, we present our main results of the closed-form expected NPVs of double barrier strategies. In Section \ref{SC4}, we provide examples of Ornstein-Uhlenbeck processes and diffusions with the exponentially decayed mean-reversion drift as they fit the condition of the optimization result.

\section{Preliminaries}\label{sec2}
\subsection{Bivariate \textit{q}-scale functions for regular diffusions}
Let the risk surplus process $X=\{X_t:t\geq 0\}$ be a regular diffusion defined on the filtered probability space $(\Omega,\mathcal{F},\mathbb{F}=(\mathcal{F}_t)_{t\geq 0},\mathbb{P})$ satisfying the common assumption. Mathematically, $X$ is given by 
\begin{gather*}
X_t=x+
\int_0^t\mu(X_s)ds
+\int_0^t\sigma(X_s)dB_s
,\quad x\in \mathbb{R},\quad t\geq 0,
\tag{2.1}
\label{2.1}
\end{gather*}
in which $\mu$ and $\sigma$ are continuous real-valued functions on $(-\infty,+\infty)$ that satisfy the common condition ensuring the existence and uniqueness of \eqref{2.1}, and $B$ is the standard one-dimensional Brownian motion. In addition, the following restrictions are enforced to the process $X$:

\begin{assumption}[Non-degeneracy]
\label{A.1}
The parameter $\sigma:(-\infty,\infty)\to (-\infty,+\infty)$ is such that 
    $$\sigma>0 \text{  on  }(-\infty,+\infty).$$
\end{assumption}

\begin{assumption}[Local-integrability]
\label{A.2}
The drift parameter $\mu:(-\infty,+\infty)\to (-\infty,+\infty)$ and the volatility paramter $\sigma:(-\infty,\infty)\to (-\infty,+\infty)$ are satisfying that  
    $$\int_{x-\varepsilon}^{x+\varepsilon}\frac{1+|\mu(s)|}{\sigma^2(s)}ds<+\infty,\quad \text{for all  }x-\varepsilon,x,x+\varepsilon\in(-\infty,+\infty) \text{  with some  }\varepsilon>0.$$
\end{assumption}

The probability law for the process $X$ issued from $x$ is written as $\mathbb{P}_x$, and the corresponding mathematical expectation is denoted by $\mathbb{E}_x$. The infinitesimal generator of the process $X$ is the operator $\mathcal{G}$ on $C^2(\mathbb{R})$ with
\begin{gather*}
    \mathcal{G} f(x)
    =\frac{1}{2}\sigma^2(x)f^{\prime\prime}(x)+\mu(x)f^{\prime}(x),
    \quad 
    x\in \mathbb{R},
\end{gather*}
and the associated Sturm-Liouvile equation for $f\in C^2(\mathbb{R})$ is given by
\begin{gather*}
\left(\mathcal{G}-q\right)f(x)=0,\quad x\in \mathbb{R},
\tag{2.2}
\label{2.2}
\end{gather*}
whose solution would be called the eigen-function of $X$. The equation \eqref{2.2} admits two fundamental positive solutions: $\phi_q^+$ (strictly increasing) and $\phi_q^-$ (strictly decreasing) if $q>0$. Moreover, the functions $\phi_q^+$ and $\phi_q^-$ can be fixed as 
\begin{gather*}
    \phi_q^+(x)
    =
    \left\{
    \begin{aligned}&
    \mathbb{E}_x\left[e^{-q\tau_p}\right]
    ,&\quad x\in (-\infty,p]\\
    &
    \left(\mathbb{E}_p\left[e^{-q\tau_x}\right]\right)^{-1}
    ,&\quad x\in (p,\infty)
    \end{aligned}
    \right.
    ,\quad 
    \phi_q^-(x)
    =
    \left\{
    \begin{aligned}&
    \left(\mathbb{E}_p\left[e^{-q\tau_x}\right]\right)^{-1},&\quad x\in (-\infty,p]\\
    &
     \mathbb{E}_x\left[e^{-q\tau_p}\right]
    ,&\quad x\in (p,\infty)
    \end{aligned},
    \right.
\end{gather*}
for arbitrarily given $p\in \mathbb{R}$, where the first hitting times of $X$ is taken as
\begin{gather*}
    \tau_x:=\inf\{t>0:X_t=x\},\quad x\in \mathbb{R},
\end{gather*}
with the convention that $\inf\emptyset=\infty$. The function $s$ named as the scale function for regular diffusion $X$ satisfies \eqref{2.1} when $q=0$. Furthermore, its derivative $s^{\prime}$ can be written as 
\begin{gather*}
    s^{\prime}(x)=e^{-\int^x_{k}
    \frac{2\mu(s)}{\sigma^2(s)}ds}
    =\frac{(\phi^+_q)^{\prime}(x)\phi_q^-(x)-(\phi_q^-)^{\prime}(x)\phi^+_q(x)}{c_q},
    \tag{2.3}
\label{2.3}
\end{gather*}
for some $k\in (-\infty,+\infty)$ and the constant $c_q>0$ which is independent of $x$.
Subsequently, define the function $W_{(q)}:\mathbb{R}^2\to \mathbb{R}$ by
\begin{gather*}
    W_{(q)}(x,y)
    :=\frac{\phi_q^+(x)\phi_q^-(y)-\phi_q^-(x)\phi^+_q(y)}{c_q}
    ,\quad \text{ for } q>0.
    \tag{2.4}
    \label{2.4}
\end{gather*}
It is worth mentioning that the following easy-to-check property of function $W_{(q)}$ in Lemma \ref{L1} would be used throughout the paper without being referred to specifically. 
\begin{lemma}\label{L1}
Given that the functions $\phi^+_{q}$ and $\phi^-_{q}$ are sufficiently smooth on $(-\infty,+\infty)$. For all $x,y,u,z\in \mathbb{R}$,
we have 
\begin{gather*}
    W_{(q)}(x,y)=-W_{(q)}(y,x),\quad 
    \frac{\partial^2 W_{(q)}(u,z)}{\partial u \partial z}
    =-\frac{\partial^2 W_{(q)}(z,u)}{\partial z \partial u},\quad
    \frac{\partial^3 W_{(q)}(u,z)}{\partial u^2 \partial z}
    =-\frac{\partial^3 W_{(q)}(z,u)}{\partial z \partial u^2}
    .
\end{gather*}
Especially, $W_{(q)}(x,x)=\frac{\partial^2 W_{(q)}(u,z)}{\partial u \partial z}|_{z=u=x}=0$ for all $x\in \mathbb{R}$.
\end{lemma}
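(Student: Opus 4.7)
The plan is to verify all the claimed identities by direct inspection of the defining formula \eqref{2.4}, since $W_{(q)}$ is a closed-form expression in the two fundamental solutions $\phi_q^+$ and $\phi_q^-$. There is essentially no analytic machinery needed: all four assertions follow from the observation that the numerator of $W_{(q)}(x,y)$ is the antisymmetrization of the product map $(x,y)\mapsto \phi_q^+(x)\phi_q^-(y)$, so antisymmetry in $(x,y)$ is built in.

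First I would write out $W_{(q)}(y,x)$ by swapping the two arguments in \eqref{2.4} and observe that it equals $-W_{(q)}(x,y)$, which immediately gives the first identity and, upon setting $y=x$, also yields $W_{(q)}(x,x)=0$. Next I would compute the mixed partial
\[
\frac{\partial^2 W_{(q)}(u,z)}{\partial u\,\partial z}
=\frac{(\phi_q^+)'(u)(\phi_q^-)'(z)-(\phi_q^-)'(u)(\phi_q^+)'(z)}{c_q},
\]
which, by the same swap-of-arguments reasoning, is antisymmetric in $(u,z)$; this gives the second identity and, at $z=u=x$, forces the mixed partial to vanish on the diagonal. The third identity is obtained in the same way: differentiate once more in $u$ to get
\[
\frac{\partial^3 W_{(q)}(u,z)}{\partial u^2\,\partial z}
=\frac{(\phi_q^+)''(u)(\phi_q^-)'(z)-(\phi_q^-)''(u)(\phi_q^+)'(z)}{c_q},
\]
and compare with the analogous expression for $\frac{\partial^3 W_{(q)}(z,u)}{\partial z\,\partial u^2}$, which is obtained from the above by relabeling $(u,z)\leftrightarrow (z,u)$ and therefore differs by a sign.

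The only smoothness prerequisite is that $\phi_q^+$ and $\phi_q^-$ are twice continuously differentiable, which is already granted in the hypothesis of the lemma; interchanging the order of partial derivatives is legitimate because each $W_{(q)}$ is a polynomial in $\phi_q^\pm$ and their derivatives evaluated at the respective arguments. There is no real obstacle here: the lemma records bookkeeping identities intended for convenient reference in later sections, and each follows by a one-line manipulation of the defining formula.
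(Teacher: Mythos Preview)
Your proposal is correct and matches the paper's approach: the paper does not give a separate proof of this lemma, merely flagging it as an ``easy-to-check'' property that follows directly from the defining formula \eqref{2.4}, which is precisely the direct antisymmetry verification you carry out.
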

The result regarding two-sided exit problem in \cite{lehoczky1977formulas} is restated here:
\begin{lemma}
If $(y-x)(y-z)<0$ and $q\geq 0$, it holds that 
\begin{gather*}
    \mathbb{E}_y[e^{-q\tau_z}1_{\{\tau_z<\tau_x\}}]
    =\frac{W_{(q)}(y,x)}{W_{(q)}(z,x)}
    =\frac{W_{(q)}(x,y)}{W_{(q)}(x,z)}
    .
    \tag{2.5}
    \label{2.5}
\end{gather*}
\end{lemma}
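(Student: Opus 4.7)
The condition $(y-x)(y-z)<0$ says that $y$ lies strictly between $x$ and $z$, so by symmetry I may assume $x<y<z$; the case $z<y<x$ is handled identically after swapping the roles of $\phi_q^+$ and $\phi_q^-$. Set
\begin{gather*}
A:=\mathbb{E}_y\!\left[e^{-q\tau_z}1_{\{\tau_z<\tau_x\}}\right],\qquad B:=\mathbb{E}_y\!\left[e^{-q\tau_x}1_{\{\tau_x<\tau_z\}}\right].
\end{gather*}
The strategy is to derive a $2\times 2$ linear system for $(A,B)$ from the strong Markov property and then solve it by Cramer's rule; the coefficients will combine to produce precisely $W_{(q)}$.

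First I would apply the strong Markov property at the hitting times $\tau_x$ and $\tau_z$ together with continuity of paths: starting from $y$, the process must pass through either $x$ or $z$ before it can reach the other barrier, so
\begin{gather*}
\mathbb{E}_y[e^{-q\tau_z}]=A+B\,\mathbb{E}_x[e^{-q\tau_z}],\qquad
\mathbb{E}_y[e^{-q\tau_x}]=B+A\,\mathbb{E}_z[e^{-q\tau_x}].
\end{gather*}
Next, I would use the definitions of $\phi_q^\pm$ to identify each one-sided Laplace transform. Choosing an auxiliary point $p$ above $z$ and applying the strong Markov property at $\tau_z$ gives $\mathbb{E}_y[e^{-q\tau_z}]=\phi_q^+(y)/\phi_q^+(z)$ for $y<z$, and analogously $\mathbb{E}_y[e^{-q\tau_x}]=\phi_q^-(y)/\phi_q^-(x)$ for $x<y$; the same identifications give $\mathbb{E}_x[e^{-q\tau_z}]=\phi_q^+(x)/\phi_q^+(z)$ and $\mathbb{E}_z[e^{-q\tau_x}]=\phi_q^-(z)/\phi_q^-(x)$.

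Substituting these into the two equations and clearing denominators turns the system into
\begin{gather*}
A\,\phi_q^+(z)+B\,\phi_q^+(x)=\phi_q^+(y),\qquad A\,\phi_q^-(z)+B\,\phi_q^-(x)=\phi_q^-(y).
\end{gather*}
Solving by Cramer's rule yields
\begin{gather*}
A=\frac{\phi_q^+(y)\phi_q^-(x)-\phi_q^-(y)\phi_q^+(x)}{\phi_q^+(z)\phi_q^-(x)-\phi_q^-(z)\phi_q^+(x)}
=\frac{W_{(q)}(y,x)}{W_{(q)}(z,x)},
\end{gather*}
where the common factor $c_q$ cancels between numerator and denominator. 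The second identity $W_{(q)}(y,x)/W_{(q)}(z,x)=W_{(q)}(x,y)/W_{(q)}(x,z)$ is then immediate from the antisymmetry of $W_{(q)}$ recorded in Lemma \ref{L1}. I do not anticipate a serious obstacle: the only delicate points are the well-definedness of the one-sided identifications (which requires $p$ to be chosen above all relevant levels, but the resulting ratio is independent of $p$) and the case $q=0$, which is covered either by continuity in $q$ or by invoking the scale function $s$ directly in place of $\phi_q^\pm$.
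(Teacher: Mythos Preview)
Your argument is correct and is the standard route to \eqref{2.5}: derive the $2\times2$ system for $(A,B)$ from the strong Markov property at $\tau_x\wedge\tau_z$, identify the one-sided Laplace transforms as ratios of $\phi_q^\pm$, and solve. The only mild caveat is that the paper fixes $\phi_q^\pm$ relative to a single reference point $p$, whereas you invoke the freedom to place $p$ above $z$ (respectively below $x$); as you note, the resulting \emph{ratios} $\phi_q^+(y)/\phi_q^+(z)$ and $\phi_q^-(y)/\phi_q^-(x)$ are independent of that choice, so this is harmless.

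As for comparison with the paper: there is nothing to compare. The paper does not prove this lemma; it simply restates the two-sided exit formula from \cite{lehoczky1977formulas} (see the sentence immediately preceding the lemma). Your proposal therefore supplies a proof where the paper only provides a citation, and your method is essentially the classical one.
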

\begin{remark}
An extensive set of specific examples of two fundamental solutions to \eqref{2.2}, i.e. $\phi_q^+$ and $\phi_q^{-}$, is available in \cite{borodin2015handbook}. In those examples, the corresponding scale function $s$ and constant $c_q$ are also given. The way of defining the function $W_{(q)}$ here could also be found in \cite{zhang2015occupation}.
\end{remark}
In the sequel, we shall make the convention that 
\begin{gather*}
W_{(q)}^1(u,z)=
\frac{\partial W_{(q)}(u,z)}{\partial u}
=
\frac{(\phi^+_q)^{\prime}(u)\phi_q^-(z)-(\phi_q^-)^{\prime}(u)\phi^+_q(z)}{c_q},
\\
W_{(q)}^{12}(u,z)=
\frac{\partial^2 W_{(q)}(u,z)}{\partial u\partial z}
=
\frac{(\phi^+_q)^{\prime}(u)(\phi_q^-)^{\prime}(z)-(\phi_q^-)^{\prime}(u)(\phi^+_q)^{\prime}(z)}{c_q},
\\
W_{(q)}^{122}(u,z)=
\frac{\partial^3 W_{(q)}(u,z)}{\partial u\partial z^2}
=
\frac{(\phi^+_q)^{\prime}(u)(\phi_q^-)^{\prime\prime}(z)-(\phi_q^-)^{\prime}(u)(\phi^+_q)^{\prime\prime}(z)}{c_q},
\\
W_{(q)}^{112}(u,z)=
\frac{\partial^3 W_{(q)}(u,z)}{\partial u^2 \partial z}
=
\frac{(\phi^+_q)^{\prime\prime}(u)(\phi_q^-)^{\prime}(z)-(\phi_q^-)^{\prime\prime}(u)(\phi^+_q)^{\prime}(z)}{c_q},
\\
    \mathbb{E}_{y}
    \left[e^{-\tau_z}1_{\{\tau_z<\tau_x\}}\right]
    =\overline{\psi}_{x,z}(y),\quad 
    \mathbb{E}_{y}
    \left[e^{-\tau_x}1_{\{\tau_x<\tau_z\}}\right]
    =\underline{\psi}_{x,z}(y),
\end{gather*}
for all $u,z\in \mathbb{R}$ and $x<y<z$.
\subsection{The definition of the expected NPVs under the bail-out setting}
The process $U^\pi$ controlled by the policy pair $\pi=\{(D_t^\pi,R_t^\pi):t\geq 0\}$, which consists of dividend payments $D^{\pi}=\{D_t^{\pi}:t\geq 0\}$ and capital injections $R^{\pi}=\{R_t^{\pi}:t\geq 0\}$, is formulated as
\begin{align*}
    U_t^{\pi}=X_t-D_t^{\pi}+R_t^{\pi},\quad t\geq 0,
\end{align*}
where $D^\pi$ and $R^\pi$ is non-decreasing and $\mathbb{F}-$adapted. It is also to be noted that here both the cumulative dividend payments $D^{\pi}$ and the volume of capital injected $R^{\pi}$ are right-continuous processes, starting from $0$. To illustrate in more detail, $\Pi$ is the admissible class that consists of the dividend policies $\pi$ such that 
\begin{align*}
    U_t^{\pi}\geq 0,\quad \text{for all}\quad t\geq 0, \quad \text{and} \quad  V_R^{\pi}(x)<\infty,\quad \text{a.s.}
\end{align*}
In the control literature, the aim is to identify the strategy $\pi\in \Pi$ that is able to maximize
\begin{align}
    V^{\pi}(x)=V_{D}^{\pi}(x)-\varphi V_R^{\pi}(x)
    ,\quad x\geq 0,
    \tag{2.6}
    \label{2.6}
\end{align}
where $\varphi>1$ is the unit cost for the capital injection, $V_D^{\pi}$ and $V_R^{\pi}$ are formulated as
\begin{align}
    V_{D}^{\pi}(x)
    =
    \mathbb{E}_{x}
    \left[
    \int_0^{\infty} 
    e^{-qt}dD_t^\pi
    \right],\quad 
     V_R^{\pi}(x)
    =
    \mathbb{E}_{x}
    \left[
    \int_0^{\infty} 
    e^{-qt}dR_t^\pi
    \right]
    ,\quad x\geq 0,
    \tag{2.7}
    \label{2.7}
\end{align}
in which $q>0$ is the discounting factor. Nevertheless, here we restrict ourselves to computing the value of \eqref{2.6} when the admissible policy is the double barrier strategy.
\subsection{Construction of the double barrier strategy}
The exact formulation of the double barrier strategy $\pi_a=(D^a,R^a)$ is given as follows:
\begin{enumerate}
    \item To begin with, let $\sigma_0=\tau_0$, $\sigma_a=\tau_a$ and $D_t^a=R_t^a=0$ for $t<\sigma_0\wedge \sigma_a$. If $X_{\sigma_0\wedge \sigma_a}=0$, then go to step 2. Otherwise, go to step 3.
    \item Define
    \begin{gather*}
        \overline{R}_t=
        -\inf_{\sigma_0\leq s \leq t}(X_s\wedge 0),\quad 
        \overline{U}_t=X_t-\inf_{\sigma_0\leq s \leq t}(X_s\wedge 0),
    \end{gather*}
    for $t\geq \sigma_0$. Set $\sigma_a=\inf\{t>0:\overline{U}_t=a\}$. If $\sigma_0 \leq t<\sigma_a$, let $D_t^a=D_{\sigma_0-}^a$, $R_t^a=R_{\sigma_0-}^a+\overline{R}_t$ and $U_t^a=\overline{U}_t$. Then go to step 3.
    \item Define
    \begin{gather*}
        \overline{D}_t
        =\sup_{\sigma_a\leq s \leq t}[(X_s-a)\vee 0],
        \quad 
        \overline{U}_t
        =X_t-\sup_{\sigma_a \leq s \leq t}
        \left[(X_s-a)\vee 0\right],
    \end{gather*}
    for $t\geq \sigma_a$. Set $\sigma_0=\inf\{t>0:\overline{U}_t=0\}$. If $\sigma_a\leq t<\sigma_0$, let $D_t^a=D_{\sigma_a-}^a+\overline{D}_t$, $R_t^a=R_{\sigma_a-}^a$ and $U_t^a=\overline{U}_t$. Then go to step 2.
\end{enumerate}
The strategy $\pi_a$ is also such that the support of measure $dD_t^a$ and $dR_t^a$ is included in the set of $\overline{\{U_t^a=a\}}$ and $\overline{\{U_t^a=0\}}$, respectively. The definition of the strategy $\pi_a$ here is rephrased from \cite{avram2007optimal}. Next, $V^{\pi_a}$, the expected NPVs function, would be abbreviated to $V^a$.
\section{Main results: the closed-form expected NPVs}\label{sec3}
\subsection{Representation of the expected NPVs}
In order to give the explicit expression of $V^a$ in Theorem \ref{T.1}, we prove an associated specific result of the function $W_{(q)}$ here firstly.
\begin{lemma}\label{L.3}
For all $a>0$, we have 
\begin{gather*}
    W_{(q)}^{12}(0,a)>0.
    \tag{3.1}
    \label{3.1}
\end{gather*}
\end{lemma}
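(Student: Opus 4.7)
The plan is to treat $h(z) := W_{(q)}^{1}(0,z)$ as a solution of a one-dimensional Sturm--Liouville ODE in $z$ and propagate positivity of $h'$ from the boundary $z=0$. Since
\begin{gather*}
h(z) = \frac{(\phi_q^+)'(0)\,\phi_q^-(z) - (\phi_q^-)'(0)\,\phi_q^+(z)}{c_q}
\end{gather*}
is a linear combination (in $z$) of $\phi_q^{+}$ and $\phi_q^{-}$, both of which satisfy \eqref{2.2}, the function $h$ itself satisfies $\tfrac{1}{2}\sigma^2(z)h''(z) + \mu(z)h'(z) = q h(z)$ for $z \in \mathbb{R}$. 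By construction $W_{(q)}^{12}(0,a) = h'(a)$, so the claim reduces to proving $h'(a) > 0$ for every $a > 0$.

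First I would pin down the boundary values at $z = 0$. Comparing with \eqref{2.3} yields $h(0) = s'(0) > 0$, while the antisymmetry $W_{(q)}^{12}(u,z) = -W_{(q)}^{12}(z,u)$ recorded in Lemma \ref{L1} (specialised to $u=z=0$) gives $h'(0) = W_{(q)}^{12}(0,0) = 0$. Inserting these two values into the ODE at $z = 0$ and using Assumption \ref{A.1} produces $h''(0) = 2qh(0)/\sigma^2(0) > 0$, so $h'$ is strictly positive on some right neighbourhood of the origin.

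To extend this positivity to all $a > 0$, I would argue by contradiction. Suppose there exists $a_0 > 0$ with $h'(a_0) = 0$ and take the smallest such $a_0$; then $h' > 0$ on $(0, a_0)$, so $h$ is strictly increasing on $[0, a_0]$ and $h(a_0) > h(0) > 0$. Re-evaluating the ODE at $z = a_0$ with $h'(a_0) = 0$ forces $h''(a_0) = 2qh(a_0)/\sigma^2(a_0) > 0$, which means $h'(z) < 0$ for $z$ slightly less than $a_0$, contradicting the choice of $a_0$. The main obstacle, if there is one, is recognising that the mixed partial vanishes on the diagonal (a one-line consequence of Lemma \ref{L1}); once this initial condition is in hand, the Sturm--Liouville equation transfers the positivity of $h(0)$ to $h''(0)$, and the rest is an elementary monotonicity propagation.
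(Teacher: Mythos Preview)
Your proof is correct, and the route differs from the paper's. The paper puts the Sturm--Liouville equation into self-adjoint form via the integrating factor $e^{\int_0^{\cdot}2\mu/\sigma^2}$, integrates the resulting exact differentials for $\phi_q^{\pm}$ over $[0,a]$, and obtains the explicit representation
\[
e^{\int_0^a\frac{2\mu(s)}{\sigma^2(s)}ds}\,W_{(q)}^{12}(0,a)=\int_0^a \frac{2q}{\sigma^2(x)}e^{\int_0^x\frac{2\mu(s)}{\sigma^2(s)}ds}W_{(q)}^1(0,x)\,dx,
\]
whose positivity follows because $W_{(q)}^1(0,x)>0$ is read off directly from the signs of $\phi_q^{\pm}$ and $(\phi_q^{\pm})'$. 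You instead observe that $h(z)=W_{(q)}^1(0,z)$ itself solves the ODE, nail down the boundary data $h(0)=s'(0)>0$, $h'(0)=0$ via \eqref{2.3} and Lemma~\ref{L1}, deduce $h''(0)>0$, and then run a first-zero contradiction to propagate $h'>0$ across $(0,\infty)$. Your argument is a bit more elementary (no integrating factor, no separate verification that $W_{(q)}^1(0,\cdot)>0$), while the paper's approach has the advantage of producing the integral identity itself, which is reused verbatim later as \eqref{3.10} in the proof of Proposition~\ref{P.2}.
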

\begin{proof}
Since the functions $\phi_{q}^+$ and $\phi_{q}^-$ are the eigen-functions of $X$, i.e. $\phi_q^+$ and $\phi_q^-$ satisfy \eqref{2.2}, for $x\geq 0$ we have 
\begin{gather*}
    (\phi_q^{+})^{\prime\prime}(x)
    +
    \frac{2\mu(x)}{\sigma^2(x)}(\phi_q^+)^{\prime}(x)
    =\frac{2q}{\sigma^2(x)}\phi_q^{+}(x),\\
    (\phi_q^{-})^{\prime\prime}(x)
    +
    \frac{2\mu(x)}{\sigma^2(x)}(\phi_q^-)^{\prime}(x)
    =\frac{2q}{\sigma^2(x)}\phi_q^{-}(x),
\end{gather*}
as $\sigma$ is strictly positive on $[0,\infty)$. Easily, we deduce that 
\begin{gather*}
    \left[(\phi_q^+)^{\prime}(t)
    e^{\int_0^t\frac{2\mu(s)}{\sigma^2(s)}ds}
    \right]^{\prime}
    \bigg|_{t=x}=\frac{2q}{\sigma^2(x)}e^{\int_0^x\frac{2\mu(s)}{\sigma^2(s)}ds}\phi_q^{+}(x),
    \tag{3.2}
    \label{3.2}
    \\
    \left[(\phi_q^-)^{\prime}(t)
    e^{\int_0^t\frac{2\mu(s)}{\sigma^2(s)}ds}
    \right]^{\prime}
    \bigg|_{t=x}=\frac{2q}{\sigma^2(x)}e^{\int_0^x\frac{2\mu(s)}{\sigma^2(s)}ds}\phi_q^{-}(x).
    \tag{3.3}
    \label{3.3}
\end{gather*}
Integrating both sides of \eqref{3.2} and \eqref{3.3} on $[0,a]$ respectively gives 
\begin{gather*}
(\phi_q^+)^{\prime}(a)
    e^{\int_0^a\frac{2\mu(s)}{\sigma^2(s)}ds}=(\phi_q^+)^{\prime}(0)
    +\int_0^a \frac{2q}{\sigma^2(x)}e^{\int_0^x\frac{2\mu(s)}{\sigma^2(s)}ds}\phi_q^{+}(x)dx,
    \tag{3.4}
    \label{3.4}
    \\
    (\phi_q^-)^{\prime}(a)
    e^{\int_0^a\frac{2\mu(s)}{\sigma^2(s)}ds}=(\phi_q^-)^{\prime}(0)
    +\int_0^a \frac{2q}{\sigma^2(x)}e^{\int_0^x\frac{2\mu(s)}{\sigma^2(s)}ds}\phi_q^{-}(x)dx,
    \tag{3.5}
    \label{3.5}
\end{gather*}
where the right-hand side of \eqref{3.4} and \eqref{3.5} is well-defined due to Assumption \ref{A.1} and Assumption \ref{A.2}. Thereby based upon \eqref{3.4} and \eqref{3.5} we have 
\begin{gather*}
e^{\int_0^a\frac{2\mu(s)}{\sigma^2(s)}ds}W_{(q)}^{12}(0,a)
=e^{\int_0^a\frac{2\mu(s)}{\sigma^2(s)}ds}
\frac{
(\phi_q^+)^{\prime}(0)(\phi^-)^{\prime}(a)
-
(\phi_q^-)^{\prime}(0)(\phi_q^{+})^{\prime}(a)
}{c_q}
\\
=\int_0^a \frac{2q}{\sigma^2(x)}e^{\int_0^x\frac{2\mu(s)}{\sigma^2(s)}ds}W_{(q)}^1(0,x)dx>0,
\end{gather*}
where $$W_{(q)}^1(0,x)=\frac{(\phi_q^{+})^{\prime}(0)\phi_q^-(x)
    -(\phi_q^{-})^{\prime}(0)\phi_q^+(x)}{c_q}>0,$$ for all $x\in \mathbb{R}$, since functions $\phi_{q}^+$ and $\phi_q^-$ are both positive, $(\phi_{q}^+)^{\prime}>0$ and $(\phi_{q}^-)^{\prime}<0$. Thus we finalize the proof.
\end{proof}
\begin{theorem}\label{T.1}
The value function defined in \eqref{2.6} for the double barrier strategy at positive level $a> 0$ is such that
\begin{gather*}
    V^a(x)=
    \left\{
    \begin{aligned}&
    V^a(0)+\varphi x,&\quad x\in (-\infty,0),\\
    &
   \frac{
    W_{(q)}^1(0,x)
    -\varphi W_{(q)}^1(a,x)
    }{W_{(q)}^{12}(0,a)}
    ,&\quad x\in [0,a],
    \\
    &
    V^a(a)+x-a
    ,&\quad x\in (a,\infty).
    \end{aligned}
    \tag{3.6}
    \label{3.6}
    \right.
\end{gather*}
\end{theorem}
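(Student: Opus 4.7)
The plan is to partition the real line at the two barriers and then, on the interior interval $[0,a]$, reduce the problem to a two-by-two linear system by combining the strong Markov property with a small-$\varepsilon$ perturbation of the barrier levels in the spirit of \cite{czarna2014dividend}.

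For $x\in(-\infty,0)$ the construction of $\pi_a$ prescribes an instantaneous capital injection of magnitude $-x$ at unit cost $\varphi$, so $V^a(x)=V^a(0)+\varphi x$; symmetrically $V^a(x)=V^a(a)+(x-a)$ for $x>a$. Both identities are direct from step~1 of the strategy. For $x\in(0,a)$ the controlled process $U^a$ coincides with the underlying $X$ until $\tau_0\wedge\tau_a$, so the strong Markov property together with \eqref{2.5} yields
\[
V^a(x)=\frac{W_{(q)}(a,x)}{W_{(q)}(a,0)}\,V^a(0)+\frac{W_{(q)}(x,0)}{W_{(q)}(a,0)}\,V^a(a),
\]
and the task is reduced to computing the two scalars $V^a(0)$ and $V^a(a)$.

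To that end I would introduce the perturbed strategy $\pi_{a,\varepsilon}$ that lets the process exit the widened window $(-\varepsilon,a+\varepsilon)$ freely, then teleports it back to $0$ or $a$ at the cost $\varphi\varepsilon$ or the gain $\varepsilon$, respectively. Writing the analogous strong Markov identity for $V^{a,\varepsilon}$ at $x=0$ and $x=a$, Taylor-expanding the coefficients in $\varepsilon$ using $W_{(q)}(u,u)=0$ from Lemma~\ref{L1} and the identification $W_{(q)}^1(u,u)=s'(u)$ from \eqref{2.3}, dividing through by $\varepsilon$, and sending $\varepsilon\to 0$ (invoking $V^{a,\varepsilon}\to V^a$) delivers the linear system
\begin{align*}
W_{(q)}^1(0,a)\,V^a(0)-W_{(q)}^1(0,0)\,V^a(a) &= \varphi\,W_{(q)}(0,a),\\
-s'(a)\,V^a(0)+W_{(q)}^1(a,0)\,V^a(a) &= -W_{(q)}(0,a).
\end{align*}
Cramer's rule solves it: a direct expansion in $\phi_q^\pm$ shows the coefficient determinant collapses to $W_{(q)}(0,a)\,W_{(q)}^{12}(0,a)$, which is nonzero by Lemma~\ref{L.3}, and the resulting $V^a(0)$ and $V^a(a)$ are precisely the values that the claimed formula $[W_{(q)}^1(0,x)-\varphi W_{(q)}^1(a,x)]/W_{(q)}^{12}(0,a)$ takes at those endpoints. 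Substituting them back into the strong Markov representation of the previous paragraph and invoking the bilinear identity
\[
W_{(q)}(a,x)\,W_{(q)}^1(y,0)+W_{(q)}(x,0)\,W_{(q)}^1(y,a)=W_{(q)}(a,0)\,W_{(q)}^1(y,x),\quad y\in\{0,a\},
\]
which, viewed as functions of $x$, asserts that the two sides are solutions of the same Sturm-Liouville equation \eqref{2.2} that agree at $x=0$ and $x=a$, then collapses the expression to the claimed closed form on $[0,a]$.

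The principal obstacle is the analytic justification of the passage $V^{a,\varepsilon}\to V^a$ as $\varepsilon\to 0$, which requires convergence of the dividend and injection processes together with their discounted integrals so that limit and Taylor coefficient can be exchanged; once this is in place, the Taylor expansions and the Cramer solution are routine. The bilinear identity used at the end, although it needs a short expansion in $\phi_q^\pm$, is ultimately just the statement that two solutions of the second-order linear Sturm-Liouville equation that coincide at two distinct points must be equal.
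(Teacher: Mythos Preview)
Your overall architecture---handling $x<0$ and $x>a$ from the construction of $\pi_a$, reducing $x\in(0,a)$ to the two endpoint values via the strong Markov property at $\tau_0\wedge\tau_a$, and then determining $V^a(0),V^a(a)$ from a $2\times 2$ linear system obtained by a small-$\varepsilon$ perturbation---is exactly the paper's. The bilinear collapse you describe at the end is also what the paper does, though there it is verified by direct expansion in $\phi_q^\pm$ rather than by a Sturm--Liouville uniqueness argument; both are fine.

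The one substantive divergence is in how the perturbation is implemented, and it is precisely where you locate your ``principal obstacle''. You introduce an auxiliary \emph{strategy} $\pi_{a,\varepsilon}$ with impulse jumps at an enlarged boundary and then need the limit $V^{a,\varepsilon}\to V^a$. The paper never changes the strategy. Keeping $\pi_a$ fixed, it applies the strong Markov property of the \emph{uncontrolled} diffusion $X$ at the stopping time $\tau_{-\varepsilon}\wedge\tau_a$ (when starting from $0$) and at $\tau_0\wedge\tau_{a+\varepsilon}$ (when starting from $a$). On $\{\tau_{-\varepsilon}<\tau_a\}$ one has $V^a(-\varepsilon)=V^a(0)-\varphi\varepsilon$ directly from the construction of $\pi_a$, and symmetrically $V^a(a+\varepsilon)=V^a(a)+\varepsilon$. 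The only residual terms are the discounted injection and dividend accrued on the short initial window,
\[
\mathbb{E}_0\Bigl[\int_0^{\tau_{-\varepsilon}\wedge\tau_a}e^{-qt}\,dR_t^a\Bigr]
\qquad\text{and}\qquad
\mathbb{E}_a\Bigl[\int_0^{\tau_0\wedge\tau_{a+\varepsilon}}e^{-qt}\,dD_t^a\Bigr],
\]
and these are argued to be $o(\varepsilon)$ directly, using that $R^a$ (respectively $D^a$) cannot grow past $\varepsilon$ before $X$ reaches $-\varepsilon$ (respectively $a+\varepsilon$). Dividing by $\varepsilon$ and sending $\varepsilon\to 0$ yields your linear system with no appeal to convergence of value functions. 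So the obstacle you flag is not intrinsic to the problem but an artefact of perturbing the strategy rather than the stopping time; the paper's choice sidesteps it entirely.
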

\begin{proof}
Usually, the establishment of identities that are similar to \eqref{3.6} heavily relies on the excursion theory. Nevertheless, we show that it can actually be eschewed on account of the perturbation technique in \cite{czarna2014dividend}. 

Let $\varepsilon>0$. In view of the construction of the double barrier strategy and strong Markov property of $X$ at $\tau_{-\varepsilon}\wedge\tau_{a}$ and $\tau_0\wedge \tau_{a+\varepsilon}$, correspondingly, we could deduce that 
\begin{align*}
    V^a(0)&=-\varphi\mathbb{E}_0\left[
    \int_0^{{ \tau_{-\varepsilon}\wedge \tau_{a}}} e^{-qt}
    dR_t^a
    \right]+
    \mathbb{E}_0\left[
    \int_{{ \tau_{-\varepsilon}\wedge \tau_{a}}}^\infty e^{-qt}
    dD_t^a
    \right]-\varphi
    \mathbb{E}_0\left[
    \int_{{ \tau_{-\varepsilon}\wedge \tau_{a}}}^\infty e^{-qt}
    dR_t^a
    \right]
    \\
    &=-\varphi\mathbb{E}_0\left[
    \int_0^{ \tau_{-\varepsilon}\wedge \tau_{a}} e^{-qt}
    dR_t^a
    \right]+
    \overline{\psi}_{a,-\varepsilon}(0)
    V^a(a)
    +
    \underline{\psi}_{a,-\varepsilon}(0)
    V^a(-\varepsilon)
    \\
    &=-\varphi\mathbb{E}_0\left[
    \int_0^{ \tau_{-\varepsilon}\wedge \tau_{a}} e^{-qt}
    dR_t^a
    \right]+
    \overline{\psi}_{a,-\varepsilon}(0)
    V^a(a)
    +
    \underline{\psi}_{a,-\varepsilon}(0)
    \left[V^a(0)-\varphi\varepsilon\right],
\end{align*}
and
\begin{align*}
    V^a(a)&
    =\mathbb{E}_a\left[
    \int_0^{\tau_0\wedge \tau_{a+\varepsilon}} e^{-qt}
    dD_t^a
    \right]+
    \mathbb{E}_a\left[
    \int_{ \tau_0\wedge \tau_{a+\varepsilon}}^\infty e^{-qt}
    dD_t^a
    \right]
    -\varphi
    \mathbb{E}_a\left[
    \int_{ \tau_0\wedge \tau_{a+\varepsilon}}^\infty e^{-qt}
    dR_t^a
    \right]
    \\
    &=\mathbb{E}_a\left[
    \int_0^{ \tau_0\wedge \tau_{a+\varepsilon}} e^{-qt}
    dD_t^a
    \right]+
    \overline{\psi}_{a+\varepsilon,0}(a)
    V^a(a+\varepsilon)
    +
    \underline{\psi}_{a+\varepsilon,0}(a)
    V^a(0)
    \\
&=\mathbb{E}_a\left[
    \int_0^{\tau_0\wedge \tau_{a+\varepsilon}} e^{-qt}
    dD_t^a
    \right]+
    \overline{\psi}_{a+\varepsilon,0}(a)
    \left[V^a(a)+\varepsilon
    \right]
    +
    \underline{\psi}_{a+\varepsilon,0}(a)V^a(0).
\end{align*}
Furthermore, we claim that $\mathbb{E}_0\left[
    \int_0^{{ \tau_{-\varepsilon}\wedge \tau_{a}}} e^{-qt}
    dR_t^a
    \right]=o(\varepsilon)$ and $\mathbb{E}_a\left[
    \int_0^{\tau_0\wedge \tau_{a+\varepsilon}} e^{-qt}
    dD_t^a
    \right]=o(\varepsilon)$ by
\begin{gather*}
0\leq \mathbb{E}_0\left[
    \int_0^{ \tau_{-\varepsilon}\wedge \tau_{a}} e^{-qt}
    dR_t^a
    \right]
    \leq 
    \varepsilon
    \mathbb{E}_0\left[
    \int_0^{ \tau_{-\varepsilon}\wedge \tau_{a}} e^{-qt}
    dt
    \right]
    \leq 
    \varepsilon
    \mathbb{E}_0\left[
    \int_0^{ \tau_{-\varepsilon}} e^{-qt}
    dt
    \right]
    \\
    \leq 
    \frac{\varepsilon}{q}
    \left[
    1-
    \mathbb{E}_0\left[
    e^{-q\tau_{-\varepsilon}}
    1_{\{\tau_{-\varepsilon}<\infty\}}
    \right]
    \right];
    \quad 
    \lim_{\varepsilon\to 0^+}\mathbb{E}_0\left[
    e^{-q\tau_{-\varepsilon}}
    1_{\{\tau_{-\varepsilon}<\infty\}}
    \right]=1,
    \tag{3.7}
    \label{3.7}
\end{gather*}
and 
\begin{gather*}
0\leq \mathbb{E}_a\left[
    \int_0^{\tau_0\wedge \tau_{a+\varepsilon}} e^{-qt}
    dD_t^a
    \right]\leq 
    \varepsilon
    \mathbb{E}_a\left[
    \int_0^{\tau_0\wedge \tau_{a+\varepsilon}} e^{-qt}
    dt
    \right]
    \leq 
    \varepsilon\mathbb{E}_a\left[
    \int_0^{\tau_{a+\varepsilon}} e^{-qt}
    dt
    \right]
    \\
    \leq 
    \frac{\varepsilon}{q}
    \left[
    1-
    \mathbb{E}_a
    \left[
    e^{-q\tau_{a+\varepsilon}}
    1_{\{\tau_{a+\varepsilon}<\infty\}}
    \right]
    \right];
    \quad
    \lim_{\varepsilon\to 0^+}\mathbb{E}_a
    \left[
    e^{-q\tau_{a+\varepsilon}}
    1_{\{\tau_{a+\varepsilon}<\infty\}}
    \right]=1,
    \\
    \tag{3.8}
    \label{3.8}
\end{gather*}
since the increment of $R^a$ under $\mathbb{P}_0$ would not be over $\varepsilon$ before the epoch $\tau_{-\varepsilon}\wedge \tau_a$ and $D_a$ under $\mathbb{P}_a$ could only increase up to $\varepsilon$ before the moment $\tau_0\wedge \tau_{a+\varepsilon}$. Dividing both sides of \eqref{3.7} and \eqref{3.8} by $\varepsilon>0$ and letting $\varepsilon\to0^+$ respectively yields the linear equation system that 
\begin{gather*}
    V^a(0)=
    \frac{\varphi W_{(q)}(0,a)+W_{(q)}^1(0,0)V^a(a)}
    {W_{(q)}^1(0,a)},
    \quad
    V^a(a)=
    \frac{W_{(q)}(a,0)+W_{(q)}^1(a,a)V^a(0)}
    {W_{(q)}^1(a,0)}
    .
\end{gather*}
After careful calculations involving the identity given by
\begin{gather*}
W_{(q)}(a,0)W_{(q)}^{12}(0,a)+W_{(q)}^1(a,a)W_{(q)}^1(0,0)=W_{(q)}^1(0,a)W_{(q)}^1(a,0),
\end{gather*}
which is shown by invoking the definition of the function $W_{(q)}$, we have 
\begin{gather*}
    V^a(0)
    =
    \frac{W_{(q)}^1(0,0)-\varphi W_{(q)}^1(a,0)}
    {W_{(q)}^{12}(0,a)}
    ,\quad 
    V^a(a)=\frac{W_{(q)}^1(0,a)-\varphi W_{(q)}^1(a,a)}
    {W_{(q)}^{12}(0,a)},
\end{gather*}
where $W_{(q)}^{12}(0,a)$ is strictly positive owing to Lemma \ref{L.3}. By the construction of the double barrier strategy $\pi_a$ and the strong Markov property of $X$ at $\tau_0\wedge \tau_a$, for $x\in(0,a)$ we infer that
\begin{gather*}
    V^a(x)
    =\mathbb{E}_x
    \left[
    \int_0^\infty e^{-qt}
    dD_t^a
    \right]-
    \varphi
    \mathbb{E}_x
    \left[
    \int_0^\infty e^{-qt}
    dR_t^a
    \right]
    =
    \mathbb{E}_x
    \left[
    \int_{{\tau}_{0}\wedge {\tau}_{a}}^\infty e^{-qt}
    dD_t^a
    \right] -
    \varphi
    \mathbb{E}_x
    \left[
    \int_{{\tau}_{0}\wedge {\tau}_{a}}^\infty e^{-qt}
    dR_t^a
    \right]
    \\
    =
    \overline{\psi}_{a,0}(x)V^a(a)
    +\underline{\psi}_{a,0}(x)V^a(0)
    \\
    =\frac{
    \left[ 
    W_{(q)}(x,0)W_{(q)}^1(0,a)
    +W_{(q)}(a,x)W_{(q)}^1(0,0)
    \right]
    -\varphi
    \left[
    W_{(q)}(x,0)W_{(q)}^1(a,a)
    +W_{(q)}(a,x)W_{(q)}^1(a,0)
    \right]
    }{
    W_{(q)}(a,0)W_{(q)}^{12}(0,a)
    }
    \\=
    \frac{
    W_{(q)}(a,0)W_{(q)}^1(0,x)
    -\varphi W_{(q)}(a,0)W_{(q)}^1(a,
    x)
    }{W_{(q)}(a,0)W_{(q)}^{12}(0,a)}
    \\
    =
    \frac{
    W_{(q)}^1(0,x)
    -\varphi W_{(q)}^1(a,x)
    }{W_{(q)}^{12}(0,a)}
    ,
\end{gather*}
where the penultimate equality is due to the calculation based upon the definition of $W_{(q)}$ in \eqref{2.4}. The value of function $V^a$ on $(-\infty,0)\cap(a,\infty)$ is obtained by the construction of strategy $\pi_a$.
\end{proof}
\begin{remark}
Notably, Theorem \ref{T.1} is not valid for the diffusion with the state space $[0,\infty)$ or $(0,\infty)$, such as Bessel processes and Geometric Brownian motions, in the sense that we have essentially made use of the hitting time $\tau_{-\varepsilon}$ with $\varepsilon>0$ within the proof of that theorem. De facto, with Lemma 2.1 given in \cite{zhang2015occupation} saying that $\lim_{x\to l}\phi_q^-(x)=\infty$, where $l$ is the left-end point of the state space, one can see that there is no general guarantee that $\lim_{\varepsilon\to 0+}W_{(q)}^{12}(\varepsilon,a)<\infty$ in such case for $l=0$ (Diffusions with the state space $[0,\infty)$ or $(0,\infty)$). Therefore, \eqref{3.6} could not be a generic formula for diffusions. 
\end{remark}
\begin{proposition}
It holds that 
\begin{gather*}
    (V^a)^{\prime}(a-)=1,\quad 
    (V^a)^{\prime}(0+)=\varphi.
\end{gather*}
\end{proposition}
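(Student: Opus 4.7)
The approach is direct: differentiate the closed form for $V^a$ on $[0,a]$ given by Theorem \ref{T.1} and evaluate the one-sided limits at the two barriers, invoking the vanishing/antisymmetry identities of Lemma \ref{L1}.

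First I would note that for $x\in[0,a]$ the explicit expression
\begin{gather*}
V^a(x)=\frac{W_{(q)}^1(0,x)-\varphi W_{(q)}^1(a,x)}{W_{(q)}^{12}(0,a)}
\end{gather*}
involves a denominator that is independent of $x$ (and nonzero by Lemma \ref{L.3}), so differentiating under the quotient reduces to differentiating the numerator with respect to its second argument. Using the notation from Section \ref{sec2}, this yields
\begin{gather*}
(V^a)'(x)=\frac{W_{(q)}^{12}(0,x)-\varphi W_{(q)}^{12}(a,x)}{W_{(q)}^{12}(0,a)},\qquad x\in(0,a).
\end{gather*}

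Next, I would take one-sided limits. At $x\to a^-$, Lemma \ref{L1} supplies $W_{(q)}^{12}(a,a)=0$, so the second term in the numerator drops out and the ratio collapses to $1$, giving $(V^a)'(a-)=1$. At $x\to 0^+$, the same lemma gives $W_{(q)}^{12}(0,0)=0$, killing the first term in the numerator, while the antisymmetry relation $W_{(q)}^{12}(a,0)=-W_{(q)}^{12}(0,a)$ converts the second term into $\varphi W_{(q)}^{12}(0,a)$; dividing by the denominator yields $(V^a)'(0+)=\varphi$.

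There is no real obstacle here: the only ingredients beyond Theorem \ref{T.1} are the two identities from Lemma \ref{L1} and the strict positivity of $W_{(q)}^{12}(0,a)$ from Lemma \ref{L.3}, both already established. The mild point worth being explicit about is that the derivatives are taken from within $(0,a)$ (so the formula from Theorem \ref{T.1} applies without any matching condition from the linear pieces on $(-\infty,0)$ and $(a,\infty)$), and that the smoothness of $\phi_q^{\pm}$ assumed in Lemma \ref{L1} justifies exchanging limits with partial derivatives at the boundary.
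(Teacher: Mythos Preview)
Your proof is correct and is exactly what the paper intends: differentiate the explicit expression from Theorem \ref{T.1} and read off the boundary values using the vanishing and antisymmetry identities of Lemma \ref{L1} together with Lemma \ref{L.3}. The paper's own proof is just the one-line remark ``Using the value of function $V^a$ in $(0,a)$ given in Theorem \ref{T.1} will lead to the result,'' so you have simply spelled this out.
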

\begin{proof}
Using the value of function $V^a$ in $(0,a)$ given in Theorem \ref{T.1} will lead to the result.
\end{proof}
\subsection{Selection of the (upper) barrier level}
An additional result is provided below in order to select the candidate barrier level $a^*>0$, maximizing \eqref{2.6} over a narrower set consisting of all double barrier strategies with the positive (upper) barrier level, which might also be instrumental in solving related control problems. Here, the way of selecting the level $a^*$ is inspired by Section 4 of \cite{NOBA202173}. Also, the rule of selecting $a^*$ only involves the function $W_{(q)}$.
\begin{lemma}
\label{lem5}
Define $V_x(a)=V^a(x)$ for all $a>0$ and $x\in\mathbb{R}$. We have 
\begin{gather*}
    V_x^{\prime}(a)
    =-
    W_{(q)}^1(0,x)
    \frac{
    W_{(q)}^{122}(0,a)+
    \varphi W_{(q)}^{112}(a,a)
    }{
    \left[W_{(q)}^{12}(0,a)\right]^2
    },
\end{gather*}
for $0\leq x\leq a$. If $x>a$, then $V_x^{\prime}(a)=V_a^{\prime}(a)$.
\end{lemma}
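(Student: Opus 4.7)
For $x \in [0,a]$ I would start from the closed-form expression in Theorem \ref{T.1}, namely
$$V^a(x) = \frac{W_{(q)}^1(0,x) - \varphi W_{(q)}^1(a,x)}{W_{(q)}^{12}(0,a)},$$
and differentiate with respect to $a$ while holding $x$ fixed. The quotient rule needs two ingredients. The denominator contributes $\partial_a W_{(q)}^{12}(0,a) = W_{(q)}^{122}(0,a)$ directly from the shorthand introduced in Section \ref{sec2}. The numerator contributes $\partial_a W_{(q)}^1(a,x)$; although this second-order partial has no standing name in the paper, reading off the explicit formula for $W_{(q)}^1$ gives $\partial_a W_{(q)}^1(a,x) = [(\phi_q^+)''(a)\phi_q^-(x) - (\phi_q^-)''(a)\phi_q^+(x)]/c_q$.

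Assembling the quotient rule into a single fraction over $[W_{(q)}^{12}(0,a)]^2$, the conclusion for $0\leq x\leq a$ reduces to the algebraic identity
$$[\partial_a W_{(q)}^1(a,x)]\,W_{(q)}^{12}(0,a) - W_{(q)}^1(a,x)\,W_{(q)}^{122}(0,a) = W_{(q)}^1(0,x)\,W_{(q)}^{112}(a,a).$$
Verifying this identity is the main obstacle. I would substitute the defining formulas for each $W$-quantity in terms of $\phi_q^\pm$, turning both sides into multilinear expressions in $(\phi_q^\pm)'(0)$, $(\phi_q^\pm)'(a)$, $(\phi_q^\pm)''(a)$, $\phi_q^+(x)$ and $\phi_q^-(x)$. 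Grouping the left-hand side by $\phi_q^+(x)$ and $\phi_q^-(x)$, the cross terms that carry the factors $(\phi_q^+)'(a)(\phi_q^+)''(a)$ or $(\phi_q^-)'(a)(\phi_q^-)''(a)$ cancel in pairs, and the survivors combine into $c_q W_{(q)}^{112}(a,a)$ multiplied by $(\phi_q^+)'(0)\phi_q^-(x) - (\phi_q^-)'(0)\phi_q^+(x) = c_q W_{(q)}^1(0,x)$. Substituting this identity back into the quotient-rule expression collapses the numerator into $-W_{(q)}^1(0,x)[W_{(q)}^{122}(0,a) + \varphi W_{(q)}^{112}(a,a)]$, which is the stated formula.

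For $x>a$, Theorem \ref{T.1} gives $V^a(x) = V^a(a) + (x-a)$, so the only genuine $a$-dependence on the right-hand side beyond the explicit $-a$ sits inside $V^a(a)$. Applying the chain rule to the composition $a\mapsto V^a(a)$ splits its total derivative into two pieces: the partial of $V^{\,\cdot}(a)$ in its upper index, which by the formula just proved (evaluated at $x=a$) equals $V_a'(a)$, and the one-sided spatial derivative $(V^a)'(a-)$. The preceding Proposition supplies $(V^a)'(a-)=1$, hence differentiating $V^a(a)+x-a$ in $a$ produces $V_a'(a) + 1 - 1 = V_a'(a)$, completing the case $x>a$.
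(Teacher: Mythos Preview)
Your argument is correct and, for the range $0\le x\le a$, coincides with the paper's: both differentiate the closed form from Theorem~\ref{T.1} via the quotient rule and reduce the simplification to the same algebraic identity, which the paper records as \eqref{3.9} (your $\partial_a W_{(q)}^1(a,x)$ is precisely $W_{(q)}^{11}(a,x)$, so your displayed identity is \eqref{3.9} with $b=0$, $c=a$, $d=x$).

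For $x>a$ you take a slightly different route than the paper. The paper substitutes the explicit expression for $V^a(a)$, differentiates directly, and then invokes \eqref{3.9} a second time (with $d=a$) to identify the result with $V_a'(a)$. Your chain-rule argument $\tfrac{d}{da}V^a(a)=V_a'(a)+(V^a)'(a-)=V_a'(a)+1$ using the smooth-fit Proposition is a legitimate shortcut that avoids re-applying the algebraic identity; it buys brevity at the cost of depending on the Proposition, whereas the paper's computation is self-contained.
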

\begin{proof}
If $0\leq x\leq a$, differentiating \eqref{3.6} with respect to $a$ entails that 
\begin{gather*}
    V_x^{\prime}(a)
    =-\frac{W_{(q)}^1(0,x)
    W_{(q)}^{122}(0,a)
    +\varphi
    \left[
    W_{(q)}^{11}(a,x)W_{(q)}^{12}(0,a)
    -W_{(q)}^1(a,x)W_{(q)}^{122}(0,a)
    \right]
    }{
    \left[W_{(q)}^{12}(0,a)\right]^2
    }\\
    =-
    W_{(q)}^1(0,x)
    \frac{
    W_{(q)}^{122}(0,a)+
    \varphi W_{(q)}^{112}(a,a)
    }{
    \left[W_{(q)}^{12}(0,a)\right]^2
    },
\end{gather*}
where we have implemented the identity given by
\begin{gather*}
 W_{(q)}^{11}(c,d)W_{(q)}^{12}(b,a)
    -W_{(q)}^1(a,d)W_{(q)}^{122}(b,c)
    =W_{(q)}^1(b,d)W_{(q)}^{112}(c,a),
    \tag{3.9}
    \label{3.9}
\end{gather*}
when $b=0$, $c=a$ and $d=x$. Notice that \eqref{3.9} holds for all $b,c,d\in \mathbb{R}$, which could be checked by employing the definition of $W_{(q)}$. 

If $a<x$ instead, differentiating \eqref{3.6} with respect to $a$ yields that 
\begin{gather*}
    V_x^{\prime}(a)
    =\frac{
    \left[W_{(q)}^{12}(0,a)\right]^2
    -
    W_{(q)}^1(0,a)
    W_{(q)}^{122}(0,a)
    -\varphi
    \left[
    W_{(q)}^{11}(a,a)W_{(q)}^{12}(0,a)
    -W_{(q)}^1(a,a)W_{(q)}^{122}(0,a)
    \right]
    }{
    \left[W_{(q)}^{12}(0,a)\right]^2
    }-1
    \\
    =V_a^{\prime}(a),
\end{gather*}
where we have made use of the identity \eqref{3.9} for $b=0$, $c=a$ and $d=a$.
\end{proof}

Subsequently, define the candidate barrier level $a^*$ by 
\begin{gather*}
    a^*=\inf 
    \{
    a>0:
    \varsigma(a)=
    W_{(q)}^{122}(0,a)+
    \varphi W_{(q)}^{112}(a,a)>0
    \},
\end{gather*}
with the convention that $\inf\emptyset=+\infty$. The fact that selecting $a^*$ leads to the maximum value of the expected NPVs is based on the following proposition.

\begin{proposition}\label{P.2}
Suppose that the drift parameter $\mu$ and volatility parameter $\sigma$ are such that $\mu\in C^1(\mathbb{R})$ and $\sigma(\cdot)\equiv \sigma>0$. Additionally, assume that the functions $\phi_q^{+}$ and $\phi_q^{-}$ are at least three times continuously differentiable on $\mathbb{R}$. Presume that $\mu\leq 0$ and $\mu^{\prime}< q$ on $[0,\infty)$. Then $a^*\in(0,+\infty)$. Furthermore, $V^{a^*}(x)\geq V^a(x)$ for all $a>0$ and $x\in [0,\infty)$.
\end{proposition}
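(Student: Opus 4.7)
The plan is to reduce the optimality claim to a sign-change statement for $\varsigma$, by combining Lemma \ref{lem5} with the positivity results $W_{(q)}^{1}(0,x)>0$ (established inside the proof of Lemma \ref{L.3}) and $W_{(q)}^{12}(0,a)>0$ (Lemma \ref{L.3}). These give $\mathrm{sgn}\,V_x^{\prime}(a)=-\mathrm{sgn}\,\varsigma(a)$ for $0\leq x\leq a$, and the same sign pattern persists for $x>a$ because Lemma \ref{lem5} yields $V_x^{\prime}(a)=V_a^{\prime}(a)$. Hence it suffices to show that $\varsigma$ is negative on $(0,a^*)$ and positive on $(a^*,\infty)$ for some necessarily unique $a^*\in(0,\infty)$; then for every fixed $x\in[0,\infty)$ the map $a\mapsto V^a(x)$ is strictly increasing on $(0,a^*]$ and strictly decreasing on $[a^*,\infty)$, from which $V^{a^*}(x)\geq V^a(x)$ follows.

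\textbf{Analytic reduction of $\varsigma$.} With $\sigma$ constant, the Sturm--Liouville equation \eqref{2.2} gives $(\phi_q^{\pm})^{\prime\prime}(a)=\tfrac{2}{\sigma^2}\!\left[q\phi_q^{\pm}(a)-\mu(a)(\phi_q^{\pm})^{\prime}(a)\right]$. Substituting this into $W_{(q)}^{122}(0,a)$ yields
\[ W_{(q)}^{122}(0,a)=\tfrac{2}{\sigma^2}\!\left[qW_{(q)}^1(0,a)-\mu(a)W_{(q)}^{12}(0,a)\right], \]
whereas in $W_{(q)}^{112}(a,a)$ the $\mu$-contributions cancel identically and, by the Wronskian identity \eqref{2.3}, one is left with $W_{(q)}^{112}(a,a)=-\tfrac{2q}{\sigma^2}s^{\prime}(a)$. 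Therefore $\varsigma(a)=\tfrac{2}{\sigma^2}h(a)$ with
\[ h(a):=qW_{(q)}^1(0,a)-\mu(a)W_{(q)}^{12}(0,a)-q\varphi s^{\prime}(a), \]
and, using $W_{(q)}^{12}(0,0)=0$ together with $W_{(q)}^1(0,0)=s^{\prime}(0)$, one has $h(0)=qs^{\prime}(0)(1-\varphi)<0$ since $\varphi>1$.

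\textbf{Monotonicity via an integrating factor.} Differentiating $h$ directly and inserting both $s^{\prime\prime}(a)=-\tfrac{2\mu(a)}{\sigma^2}s^{\prime}(a)$ and the expression for $W_{(q)}^{122}(0,a)$ above, the $\mu^2$-terms consolidate into $h$ itself and yield the linear first-order relation
\[ h^{\prime}(a)+\tfrac{2\mu(a)}{\sigma^2}h(a)=(q-\mu^{\prime}(a))W_{(q)}^{12}(0,a). \]
Multiplying by the integrating factor $I(a):=\exp\!\left(\int_0^a\tfrac{2\mu(s)}{\sigma^2}ds\right)$ produces $\tfrac{d}{da}[I(a)h(a)]=I(a)(q-\mu^{\prime}(a))W_{(q)}^{12}(0,a)$, which is strictly positive for $a>0$ by Lemma \ref{L.3} and the hypothesis $\mu^{\prime}<q$. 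Thus $a\mapsto I(a)h(a)$ is strictly increasing from $h(0)<0$ and admits at most one zero.

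\textbf{Existence of $a^*$ and the anticipated obstacle.} Using $I(a)s^{\prime}(a)\equiv 1$, rewrite
\[ I(a)h(a)=qI(a)W_{(q)}^1(0,a)-\mu(a)I(a)W_{(q)}^{12}(0,a)-q\varphi, \]
and combine with the identity $I(a)W_{(q)}^{12}(0,a)=\tfrac{2q}{\sigma^2}\!\int_0^a\!I(u)W_{(q)}^1(0,u)du$ recorded inside the proof of Lemma \ref{L.3}. Under $\mu\leq 0$ and $\mu^{\prime}<q$, a Sturm-type comparison between $W_{(q)}^{1}(0,\cdot)$, which is a positive linear combination of the eigen-solutions of \eqref{2.2} with parameter $q>0$, and $s^{\prime}$, which solves the same equation with $q=0$, shows that $W_{(q)}^1(0,a)$ outgrows $\varphi s^{\prime}(a)$ at infinity; combined with the non-negative contribution $-\mu(a)I(a)W_{(q)}^{12}(0,a)$, this forces $I(a)h(a)\to+\infty$. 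Consequently $a^*\in(0,\infty)$ is well-defined, $\varsigma<0$ on $(0,a^*)$ and $\varsigma>0$ on $(a^*,\infty)$, and the first paragraph concludes the optimality $V^{a^*}(x)\geq V^a(x)$ for every $a>0$ and $x\in[0,\infty)$. The main obstacle I anticipate is precisely this existence step: the algebra producing $h$ and the monotonicity of $I\cdot h$ are clean consequences of \eqref{2.2}, but ruling out the degenerate scenario in which $I(a)h(a)$ remains strictly negative on the entire half-line requires a quantitative comparison of eigen-function growth that is not immediate from $\mu\leq 0$ and $\mu^{\prime}<q$ alone.
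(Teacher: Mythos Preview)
Your reduction of $\varsigma$ to $\tfrac{2}{\sigma^2}h$ and the first-order ODE
\[
h'(a)+\tfrac{2\mu(a)}{\sigma^2}h(a)=(q-\mu'(a))\,W_{(q)}^{12}(0,a)
\]
are correct and give a cleaner packaging than the paper's argument. The paper never writes $\varsigma$ in closed form; instead it differentiates $\varsigma$ directly and shows $\varsigma'(a)=W_{(q)}^{1222}(0,a)+\varphi\,W_{(q)}^{1112}(a,a)>0$ by bounding each summand separately, which forces a detour through the auxiliary inequalities $W_{(q)}^{11}(a,a)\leq 0$ and $W_{(q)}^{112}(a,a)=-\tfrac{2q}{\sigma^2}W_{(q)}^1(a,a)<0$. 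Your integrating-factor route bypasses all of that: once the $\mu$-terms cancel in $W_{(q)}^{112}(a,a)$ and the Wronskian \eqref{2.3} gives $-\tfrac{2q}{\sigma^2}s'(a)$, the ODE for $h$ falls out in one line. What the paper's route buys is the slightly stronger statement that $\varsigma$ itself is strictly increasing (not just $I\cdot h$), though this is not needed for the conclusion.

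The one place your proposal is weaker is precisely where you flag it: the existence step. But the obstacle you anticipate is illusory, and the Sturm-type growth comparison you invoke is both vague and unnecessary. You already have everything in hand. From the identity you quote from the proof of Lemma~\ref{L.3}, $a\mapsto I(a)W_{(q)}^{12}(0,a)=\tfrac{2q}{\sigma^2}\int_0^a I(x)W_{(q)}^1(0,x)\,dx$ is strictly increasing, so for any fixed $\bar a>0$ and all $a>\bar a$,
\[
\bigl(I h\bigr)'(a)=I(a)(q-\mu'(a))W_{(q)}^{12}(0,a)\;\geq\;(q-\mu'(a))\,I(\bar a)W_{(q)}^{12}(0,\bar a).
\]
Integrating from $\bar a$ to $A$ gives
\[
I(A)h(A)\;\geq\;I(\bar a)h(\bar a)+I(\bar a)W_{(q)}^{12}(0,\bar a)\bigl[q(A-\bar a)-\mu(A)+\mu(\bar a)\bigr]\longrightarrow+\infty,
\]
since $-\mu(A)\geq 0$. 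This is exactly the mechanism the paper uses (with $\varsigma'$ in place of $(Ih)'$): a positive lower bound times $\int(q-\mu')$, which diverges because $qa-\mu(a)\geq qa$. Replace your Sturm-comparison paragraph by this two-line integration and the proof is complete.
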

\begin{proof}
It follows from Lemma \ref{L.3} and its proof that 
\begin{gather*}
    W_{(q)}^{1}(x,y)>0,\quad\text{for all  } x,y\in \mathbb{R}, 
    \text{  and  }
    W_{(q)}^{12}(0,a)>0,\quad\text{for all  }a>0,
\end{gather*}
which shall be used in the sequel without specific justification.

To begin with, $W_{(q)}^{1222}(0,a)>0$ for all $a>0$ is the fact to be shown. In view of the proof of Lemma \ref{L.3}, we have
\begin{gather*}
    W_{(q)}^{12}(0,a)
=e^{-\int_0^a\frac{2\mu(s)}{\sigma^2}ds}\int_0^a \frac{2q}{\sigma^2}e^{\int_0^x\frac{2\mu(s)}{\sigma^2}ds}W_{(q)}^1(0,x)dx,
\tag{3.10}
\label{3.10}
\end{gather*}
for all $a>0$. Differentiating \eqref{3.10} leads to 
\begin{gather*}
    W_{(q)}^{122}(0,a)
    =\frac{2q}{\sigma^2}W_{(q)}^1(0,a)
    -\frac{2\mu(a)}{\sigma^2}W_{(q)}^{12}(0,a)>0,
    \tag{3.11}
\label{3.11}
\end{gather*}
since $\mu \leq 0$ on $[0,\infty)$. Afterwards, by the condition that $\mu \leq 0$ and $\mu^{\prime}< q$ on $[0,\infty)$, differentiating \eqref{3.11} would entail that 
\begin{gather*}
    W_{(q)}^{1222}(0,a)
    =\frac{2[q-\mu^{\prime}(a)]}{\sigma^2}W_{(q)}^{12}(0,a)
    -\frac{2\mu(a)}{\sigma^2}W_{(q)}^{122}(0,a)>0,
    \tag{3.12}
\label{3.12}
\end{gather*}
where the last inequality is obtained due to the condition mentioned and the fact that $W_{(q)}^{12}(0,a)>0$ and $W_{(q)}^{122}(0,a)>0$ for all $a>0$.

Next, we are to prove that $W_{(q)}^{1112}(a,a)>0$ for all $a>0$. Let $\varepsilon>0$. Integrating both sides of \eqref{3.2} and \eqref{3.3} on $[a-\varepsilon,a]$, we immediately deduce that
\begin{gather*}
(\phi_q^+)^{\prime}(a)
    e^{\int_0^a\frac{2\mu(s)}{\sigma^2}ds}
    -
(\phi_q^+)^{\prime}(a-\varepsilon)
    e^{\int_0^{a-\varepsilon}\frac{2\mu(s)}{\sigma^2}ds}
        =
    \int_{a-\varepsilon}^a \frac{2q}{\sigma^2}e^{\int_0^x\frac{2\mu(s)}{\sigma^2}ds}\phi_q^{+}(x)dx,
    \tag{3.13}
\label{3.13}
    \\
(\phi_q^-)^{\prime}(a)
    e^{\int_0^a\frac{2\mu(s)}{\sigma^2}ds}
    -
(\phi_q^-)^{\prime}(a-\varepsilon)
    e^{\int_0^{a-\varepsilon}\frac{2\mu(s)}{\sigma^2}ds}
        =
    \int_{a-\varepsilon}^a \frac{2q}{\sigma^2}e^{\int_0^x\frac{2\mu(s)}{\sigma^2}ds}\phi_q^{-}(x)dx.
    \tag{3.14}
\label{3.14}
\end{gather*}
In view of \eqref{3.13}, \eqref{3.14} and the fact that $W_{(q)}^1(x,a)> 0$ for $x\in[0,a]$ and that $\mu\leq 0$ on $[0,\infty)$, we obtain that
\begin{gather*}
e^{\int_0^a\frac{2\mu(s)}{\sigma^2}ds}
    \frac{W_{(q)}^1(a,a)-W_{(q)}^1(a-\varepsilon,a)}{\varepsilon}
    \leq
    \frac{W_{(q)}^1(a,a)
    e^{\int_0^a\frac{2\mu(s)}{\sigma^2}ds}
    -
W_{(q)}^{1}(a-\varepsilon,a)
    e^{\int_0^{a-\varepsilon}\frac{2\mu(s)}{\sigma^2}ds}}{\varepsilon}
    \\
    \leq 
    -\frac{
    \int_{a-\varepsilon}^a \frac{2q}{\sigma^2}e^{\int_0^x\frac{2\mu(s)}{\sigma^2}ds}
    W_{(q)}(a,x)dx
    }{\varepsilon}.
    \tag{3.15}
\label{3.15}
\end{gather*}
Letting $\varepsilon\to 0^+$ in \eqref{3.15}, we obtain $W_{(q)}^{11}(a,a)\leq 0$ because of the fact that $W_{(q)}(a,a)=0$ makes the right hand side vanish. Again by $W_{(q)}^{12}(a,a)=0$, \eqref{3.13}, and \eqref{3.14}, we have 
\begin{gather*}
W_{(q)}^{112}(a,a)=-\lim_{\varepsilon\to 0^+}
\frac{W_{(q)}^{12}(a-\varepsilon,a)}{\varepsilon}
\\
=-\lim_{\varepsilon\to 0^+}e^{-\int_0^{a-\varepsilon}\frac{2\mu(s)}{\sigma^2}ds}\frac{\int_{a-\varepsilon}^a
\frac{2q}{\sigma^2}e^{\int_0^x\frac{2\mu(s)}{\sigma^2}ds}
    W_{(q)}^1(a,x)dx
}{\varepsilon}=-\frac{2q}{\sigma^2}W_{(q)}^1(a,a)<0.
\tag{3.16}
\label{3.16}
\end{gather*}
Differentiating \eqref{3.16}, we have 
\begin{gather*}
    W_{(q)}^{1112}(a,a)
    =
    -\frac{2q}{\sigma^2}W_{(q)}^{11}(a,a)\geq 0,
    \tag{3.17}
    \label{3.17}
\end{gather*}
because of the fact that $W_{(q)}^{11}(a,a)\leq 0$ for all $a>0$. Subsequently, notice that 
\begin{gather*}
\varsigma(0)=
    W_{(q)}^{122}(0,0)+\varphi W_{(q)}^{112}(0,0)
    =(\varphi-1)W_{(q)}^{112}(0,0)=-(\varphi -1 )\frac{2q}{\sigma^2}W_{(q)}^1(0,0)<0.
\end{gather*}
Then $a^*> 0$. Let $\bar{a}>0$. Furthermore, from \eqref{3.12} and \eqref{3.17}, we have 
\begin{gather*}
\varsigma^{\prime}(a)=
W_{(q)}^{1222}(0,a)+\varphi W_{(q)}^{1112}(a,a)
    \geq W_{(q)}^{1222}(0,a)
    \\
    \geq 
    \frac{2 [q-\mu^{\prime}(a)] }{\sigma^2}W_{(q)}^{12}(0,a)>
    \frac{2 [q-\mu^{\prime}(a)]  }{\sigma^2}W_{(q)}^{12}(0,\bar{a})>0,
    \end{gather*}
for all $a>\bar{a}>0$, where the penultimate inequality is obtained by invoking that $W_{(q)}^{12}(0,\cdot)$ is strictly increasing on $(0,\infty)$ ($W_{(q)}^{122}(0,a)>0$ for all $a>0$), and the last inequality is achieved using \eqref{3.1}. Hence, $\varsigma(a)$ drifts to $+\infty$ as $a$ goes to infinity. Consequently, $0<a^*<+\infty$. Recalling Lemma \ref{lem5} shows that $V_x^{\prime}(a)$ has the sign as $\varsigma(a)$. As a result, for arbitrary $x\in[0,\infty)$, we obtain that $V_x^{\prime}(a)>0$ for $a\in (0,a^*)$ and $V_x^{\prime}(a)<0$ for $a\in (a^*,+\infty)$, which ends the proof.
\end{proof}
\begin{remark}
It is worth mentioning that in \cite{zhu2016optimal}, the double barrier strategy with the positive barrier level turns out to be non-optimal under the condition that $\sigma>0$ and $\mu^{\prime}<q$ on $(0,\infty)$, which in fact does not necessarily contradict with our result since \cite{zhu2016optimal} deals with an optimal control problem in a greater admissible class, which includes both strategies with and without capital injections. In our case, the goal is restricted to selecting the optimal one within the set of double barrier strategies. Furthermore, \cite{ferrari2019class} investigated the capital-injected dividend control problem with a proportional instantaneous reward. \cite{yin2012first} also derived the explicit value function of double barrier strategies for diffusions using It\^o's formula. 
\end{remark}

\section{Case study}
\label{SC4}
Both cases below ensure that Assumption \ref{A.1} and Assumption \ref{A.2} hold. Additionally, both two models are chosen so that the precondition of Proposition \ref{P.2} is satisfied, which means $a^*\in (0,+\infty)$ in both cases. 
\subsection{Ornstein-Uhlenbeck processes}
Let $\theta>0$. For the Ornstein-Uhlunbeck process described by the stochastic differential equation (SDE) as follows
\begin{gather*}
    X_t=x-\int_0^t \theta X_sds+B_t,\quad x\in \mathbb{R},
\end{gather*}
the two associated eigen-functions $\phi^+_q$ and $\phi^{-}_q$ could be chosen as
\begin{gather*}
    \phi_q^+(x)
    =e^{\frac{x^2\theta }{2}}
    D_{-\frac{q}{\theta}}(-\sqrt{2\theta}{x}),\quad 
    \phi_q^-(x)
    =e^{\frac{x^2\theta}{2}}
    D_{-\frac{q}{\theta}}(\sqrt{2\theta}{x}),\quad x\in \mathbb{R},
\end{gather*}
where $D$ is the parabolic cylinder function, whose property could be found in Appendix 1.22. and Appendix 2.6. of \cite{borodin2015handbook}. From Section 2.6 of \cite{bayraktar2010one} and the reference therein, the exact definition of $D$ is given by
\begin{gather*}
D=D_{v}(x)=2^{-\frac{v}{2}}e^{-\frac{x^2}{4}}H_v(\frac{x}{\sqrt{2}}),
\text{  for all  }x\in \mathbb{R},
\end{gather*}
where $H_v$ is the Hermite function defined as 
\begin{gather*}
    H_v(x)=\frac{e^{x^2}}{\Gamma(-v)}
    \int_0^{\infty}
    s^{-v-1}e^{-(s+x)^2}ds,\quad \text{Re}(v)<0.
\end{gather*}

The constant $c_q$ is given by $c_q=\frac{2\sqrt{\theta \pi}}{\Gamma(\frac{q}{\theta})}$ and the scale function is fixed as $s(x)=\int_0^x e^{\theta y^2}dy$. Therefore, we obtain 
\begin{gather*}
    W_{(q)}(x,y)
    =
    e^{\frac{(x^2+y^2)\theta}{2}}
    \left[
    D_{-\frac{q}{\theta}}(-\sqrt{2\theta}x)
    D_{-\frac{q}{\theta}}(\sqrt{2\theta}y)
    -
    D_{-\frac{q}{\theta}}(-\sqrt{2\theta}y)
    D_{-\frac{q}{\theta}}(\sqrt{2\theta}x)
    \right].
\end{gather*}
\subsection{Diffusions with the exponentially decayed mean-reversion drift}
\cite{dassios2018economic} proposed a type of stochastic process, which is the logarithm of the Shiryaev process. It is governed by the SDE, defined as 
\begin{gather*}
    X_t=x+\int_0^t\nu(e^{-2lX_t}-1)ds+B_t,\quad x\in \mathbb{R},
\end{gather*}
where $\nu> 0 $ and $l>0$. Two corresponding eigenfunctions are given by 
\begin{gather*}
\phi_q^+(x)=e^{x(\nu-\sqrt{\nu^2+2q})}M(\frac{\sqrt{\nu^2+2q}-\nu}{2l},\frac{\sqrt{\nu^2+2q}+l}{l},\frac{\nu e^{-2lx}}{l}),\quad x\in \mathbb{R},
\\
\phi_q^-(x)=e^{x(\nu-\sqrt{\nu^2+2q})}U(\frac{\sqrt{\nu^2+2q}-\nu}{2l},\frac{\sqrt{\nu^2+2q}+l}{l},\frac{\nu e^{-2lx}}{l}),\quad x\in \mathbb{R},
\end{gather*}
where $M$ and $U$ are the confluent hypergeometric function of the first kind and the second kind.
\bibliographystyle{alpha}
\bibliography{references}
\end{document}